\newcommand{\pack}[2]{\{#1\}_{#2}}
\newcommand{\DECRYPT}[3]
  {\mathsf{decrypt}\ #1\ \mathsf{as}\ #2\,  \mathsf{in}\, #3}
\newcommand{\leftSQ}{\boldsymbol{[}\kern-.20em\boldsymbol{[}\,}
\newcommand{\rightSQ}{\boldsymbol{]}\kern-.20em\boldsymbol{]}\,}
\newcommand{\encSQ}[1]{\leftSQ{#1}\,\rightSQ}
\newcommand{\encSQA}[1]{\leftSQ{#1}\,\boldsymbol{]}\kern-.20em\boldsymbol{]}_Z}
\newcommand{\ignore}[1]{}
\newcommand{\review}{\ignore}
\newcommand{\lts}[1]
{ \setbox0=\hbox{$\ {\scriptstyle#1}\ $}
  \setbox1=\hbox{$\rightarrow$}
  \loop\setbox1=\hbox{$-$\kern-0.3em\unhbox1}\ifdim\wd1<\wd0\repeat
  \hbox{$\ \mathop{\box1}\limits^{#1}\ $}
}
\newcommand{\Lts}[1]
{ \setbox0=\hbox{$\ {}^{#1}\ $}
  \setbox1=\hbox{$\Rightarrow$}
  \loop\setbox1=\hbox{$=$\kern-0.3em\unhbox1}\ifdim\wd1<\wd0\repeat
  \hbox{$\ \mathop{\box1}\limits^{#1}\ $}
}
\newcommand{\accept}[1]{:{#1}}
\newcommand{\forbids}[1]{\div{#1}}
\newcommand{\keyword}[1]{\textsf{\upshape\small #1}\xspace}
\newcommand{\ifk}{\keyword{if}}
\newcommand{\thenk}{\keyword{then}}
\newcommand{\elsek}{\keyword{else}} 
\newcommand{\PAR}{\mid}
\newcommand{\INACT}{\mathbf 0}  
\newcommand{\oput}[1]{\langle{#1}\rangle}
\newcommand{\SENDn}[2]{\overline{#1}\langle{#2}\rangle}
\newcommand{\SEND}[2]{\SENDn{#1}{#2}.} 
\newcommand{\RECEIVEn}[2]{{#1}({#2})}
\newcommand{\RECEIVE}[2]{\RECEIVEn{#1}{#2}.}
\newcommand{\RECEIVEnt}[2]{{#1}[{#2}]}
\newcommand{\RECEIVET}[2]{\RECEIVEnt{#1}{#2}.}
\newcommand{\IF}[2]{\ifk\,#1\,\thenk\,#2\,\elsek\,}
\newcommand{\IFs}[1]{\ifk\,#1\,\thenk\, }
\newcommand{\NR}[1]{(\keyword{new}\, #1)}
\newcommand{\SR}[1]{[\keyword{hide}\, #1]} 
\newcommand{\spy}[1]{\keyword{spy}.{#1}}
\newcommand{\spyb}[1]{\keyword{spy}\accept S.{#1}}
\newcommand{\osred}{\,\rightarrow\,}                
\newcommand{\subs}[2]{\{{#1}/{#2}\}} 
\newcommand{\barb}{\mathrel{\!\downarrow}}
\newcommand{\nbarb}{\mathrel{\!\not\downarrow}}
\newcommand{\Barb}{\mathrel{\!\Downarrow}}
\newcommand{\nBarb}{\!\not\Downarrow}
\newcommand{\CR}{\,{\cal R}\,}
\newcommand{\fv}{\operatorname{fn}}      
\newcommand{\bv}{\operatorname{bn}}      
\newcommand{\subject}{\operatorname{subj}}
\newcommand{\object}{\operatorname{obj}}
\newcommand{\rulename}[1]{[\text{\sc #1}]\xspace} 
\newcommand{\reductionrulename}[1]{\rulename{R-{#1}}}
\newcommand{\rcom}{\reductionrulename{Com}}
\newcommand{\rtcom}{\reductionrulename{T-Com}} 
\newcommand{\rres}{\reductionrulename{New}}
\newcommand{\rhide}{\reductionrulename{Hide}}
\newcommand{\rstruct}{\reductionrulename{Struct}} 
\newcommand{\rpar}{\reductionrulename{Par}}
\newcommand{\sreductionrulename}[1]{\rulename{RS-{#1}}} 
\newcommand{\rscom}{\sreductionrulename{Com}}
\newcommand{\rstcom}{\sreductionrulename{T-Com}} 
\newcommand{\ltsrulename}[1]{\rulename{L-{#1}}}
\newcommand{\lout}{\ltsrulename{Out}} 
\newcommand{\lopen}{\ltsrulename{Open}}
\newcommand{\lclose}{\ltsrulename{Close}} 
\newcommand{\linp}{\ltsrulename{In}}
\newcommand{\linpt}{\ltsrulename{In-T}} 
\newcommand{\lcom}{\ltsrulename{Com}} 
\newcommand{\lpar}{\ltsrulename{Par}}
\newcommand{\lrepl}{\ltsrulename{Repl}}
\newcommand{\lres}{\ltsrulename{New}} 
\newcommand{\lhide}{\ltsrulename{Hide}}
\newcommand{\sltsrulename}[1]{\rulename{L-{#1}}}
\newcommand{\slclose}{\sltsrulename{Close}}
\newcommand{\slspycom}{\sltsrulename{Spy-Com}}
\newcommand{\slcom}{\sltsrulename{Com}}
\newcommand{\slresn}{\sltsrulename{New}} 
\newcommand{\slhidet}{\sltsrulename{Hide}} 
\newcommand{\slspy}{\sltsrulename{Spy}}
\newcommand{\slspyn}{\sltsrulename{Spy-Res}}
\newcommand{\less}{\backslash} 
\newcommand{\leftB}{(\kern-.22em| }
\newcommand{\rightB}{ |\kern-.22em)}
\newcommand{\enc}[1]{\leftB{#1}\rightB}
\newcommand{\leftH}{[\kern-.35em[\, }
\newcommand{\rightH}{ \,]\kern-.35em]}
\newcommand{\encH}[1]{\leftH{#1}\rightH}
\newcommand{\grmeq}{\; ::= \;}
\newcommand{\eqdef}{\;\stackrel{\text{\scriptsize def}}{=}\;}
\newcommand{\scong}{\stackrel{\bullet}{\cong}}
\newcommand{\sapprox}{\stackrel{\bullet}{\approx}}
\newcommand{\pwd}{{\small \mathit{pwd}}} 
\newcommand{\sys}{{\small \mathit{sys}}} 
\newcommand{\shrink}[1]{}
\newcommand{\typesetproofs}[1]{} 
\title{Hide and New in the $\pi$-calculus 
\author{Marco Giunti
\institute{CITI and DI-FCT, Universidade Nova de Lisboa, Portugal\thanks{Work partially
supported by the project PTDC/EIA-CCO/117513/2010 Liveness, Statically. The work
has been done during the period that the author  spent at LIX, Ecole
Polytechnique, with the support of an ERCIM  postdoc fellowship. The author would like to
thank INRIA and ERCIM for such oppurtunity.}}  
\and Catuscia Palamidessi \qquad Frank D. Valencia\institute{INRIA Saclay and LIX, Ecole
Polytechnique, France  \thanks{Work
partially supported by the project ANR-09-BLAN-0169-01  PANDA}}}}
\newtheorem{theorem}{Theorem}[section]
\newtheorem{proposition}[theorem]{Proposition}
\newtheorem{example}[theorem]{Example} 
\newtheorem{definition}[theorem]{Definition} 
\begin{document}
 
\maketitle

\begin{abstract} 
\ignore{
In the $\pi$-calculus-based formalisms for security applications, the restriction
operator
({\em new}) plays a fundamental role as it allows hiding the use of a channel within a
scope. In other words, the {\em subject of the communication} can be accessed only by the
processes in the scope of the operator. The scope, however, can be enlarged dynamically by
extruding the name. Furthermore, communication may be between remote processes. For these
reasons the actual implementation of the {\em new} operator usually relies on
non-dedicated channels that may be insecure due, for instance, to side-channel attacks. 
}
In this paper, we enrich the $\pi$-calculus with an operator for confidentiality ({\em
hide}), whose main effect is to restrict the access to the {\em object of the
communication}, thus representing confidentiality in a natural way.  The {\em hide}
operator  is meant for local communication, and it differs from {\em new} in that it
forbids the extrusion of the name and hence has a static scope. Consequently, a
communication channel in the scope of a {\em hide} can be implemented as a dedicated
channel, and it is more secure than one in the scope of a {\em new}. To emphasize the
difference, we introduce a {\em spy} context that represents a side-channel attack and
breaks some of the standard security equations for {\em new}. To formally reason on the
security guarantees provided by the \emph{hide} construct, we introduce an observational
theory
and establish stronger equivalences by relying on a proof technique based on bisimulation
semantics.

\end{abstract}

\section{Introduction}
\label{sec:intro}

The  restriction operator is present in most process calculi. Its behaviour is crucial for
\emph{expressiveness} (e.g., for specifying  unbounded linked structures, nonce generation
and locality).  
In the $\pi$-calculus~\cite{MPW92a,MPW92b}, it plays a prominent role: It provides 
for the generation and extrusion of unique names. In CCS~\cite{Milner80}, it is also
fundamental but it does not provide for name extrusion: It limits the interface of a given
process with its external world. In this paper we shall extend the $\pi$-calculus with a
hiding operator, called  {\sf hide}, that behaves similarly to the CCS restriction. The
motivation for our work comes from the realm of \emph{secrecy} and \emph{confidentiality}: we shall
argue that {\sf hide} allows us to express and guarantee secret communications.

\paragraph{Motivation.} Secrecy and confidentiality are major concerns in most  systems of communicating agents. 
Either because some of the agents are untrusted, or because the communication uses insecure channels, there 
may be the risk of sensitive information being leaked to potentially malicious entities. 
The price to pay for such security breaches may also be very high. 
It is not surprising, therefore, that secrecy and confidentiality have become  central issues in the 
formal specification and  verification of communicating systems. 

The $\pi$-calculus and especially its variants enriched with mechanisms to
express 
cryptographic operations, the  spi calculus~\cite{AbadiG99} and the applied
$\pi$-calculus~\cite{AbadiF01}, have become 
popular formalisms for security applications. They  all feature the operator {\sf new} (restriction) 
and make crucial use of it in the definition of security protocols. 
The prominent aspects of {\sf new} are the capability of creating a new channel name, 
whose use is restricted within a certain scope, and the possibility of enlarging its scope by communicating it to other processes. 
The latter property is central to the most interesting feature of the $\pi$-calculus: the \emph{mobility} of the communication structure. 

Although in principle the restriction  aspect of {\sf new}  should guarantee that the channel is used for communication 
within a secure environment only, the capability of extruding the scope leads to security problems. 
In particular, it makes it unnatural  to implement the communication using dedicated channels, and
non-dedicated channels are not  secure by default. 
The spi calculus and the applied $\pi$-calculus do not assume, indeed, any 
security guarantee on the channel, and implement security by using cryptographic encryption. 

Let us illustrate the problem with an example. 
The following $\pi$-calculus process  describes a
protocol for the exchange of a confidential information:
\[
P= \SENDn s{\textrm{CreditCard}}  \PAR
\RECEIVE sx\IFs {x=\textrm{OwnerCard}} (\SENDn p{\textrm{Ok}} \PAR \SENDn ps) 
\qquad p\ne s
\]
In this specification, the thread on the left sends a
credit card number over the channel~$s$ to the thread on the right which is waiting
for an input on the same channel. If the received card number is the expected one,
then the latter both sends an ack and forwards the communication channel~$s$ on a public
channel~$p$. 
The problem is that, while the confidentiality of the information would require  the context to be
unable to interfere with the protocol and to steal the credit card number,  
in fact this is not guaranteed in the $\pi$-calculus where 
interaction with a parallel process waiting for input on channel~$s$ is allowed.

To amend this problem, the idea is to let the channel for the
exchange of the secret information  available only to the process $P$, 
restricting its scope to $P$ with the  declaration: $\NR s P$. 
The $\pi$-calculus semantics makes the exchange invisible to the context. This is formalized
by the following observational equation stating that no $\pi$-calculus context 
can  tell apart $P$ from its continuation:
\begin{align}
\NR s P\cong^{\textrm{obs}}_\pi \NR s \IFs
{\,\textrm{CreditCard}=\textrm{OwnerCard}}(\SENDn p{\textrm{Ok}} \PAR \SENDn ps) 
\label{eq:inact} 
\end{align}

Unfortunately, to preserve such behavioral
equations when  processes are deployed in  untrusted
environments  is difficult,  since, as explained above, we cannot rely on dedicated
channels for
communication on names created by the {\sf new} operator. 
One natural approach to cope with this problem is to map the private communication 
within the scope of the {\sf new} into  open communications protected by cryptography.

For instance,  the process $\NR s P$  could be implemented in the spi calculus protocol
$\encSQ {\NR sP}$ below by using a public-key crypto-scheme.
In this implementation the creation of a $\pi$-calculus channel $s$  is mapped into the 
creation of a couple
of spi calculus keys: a public key~$s^+$ and a private key~$s^-$.
The receiver performs decryption  of the crypto-packet ${\pack{\textrm{CC}}{s^+}}$
with the private key $s^-$; the operation  assigns the card number to the variable
in the conditional test.  
\begin{align*}
\encSQ {\NR s P} \eqdef& \NR{s^+,s^-}
(\SEND {net}{\pack {\textrm{CC}}{s^+}}\INACT  \PAR   {net}(y).\DECRYPT y{\pack
x{s^-}}Q)
\\
Q\eqdef& \IFs {x=\textrm{OC}}{ \SENDn {net}{\pack{\textrm{Ok}}{p^+}}  \PAR  \SENDn
{net}{\pack{s^+,s^-}{p^+} }} 
\end{align*} 
Unfortunately, the naive protocol above suffers from a number of problems,
among which the most serious is the lack of forward secrecy~\cite{abadi98}: this property 
would guarantee that if keys are corrupted at some time~$t$ then the protocol steps
occurred before~$t$ do preserve secrecy.  
In particular, forward secrecy requires that the content of the packet~$\pack
{\textrm{CC}}{s^+}$, which is the credit card number, is not disclosed if  at some step of
the computation the context gains the decryption key~$s^-$. Stated differently, the
implementation $\encSQ \cdot$  should preserve the semantics of equation (\ref{eq:inact}):
that is, it should be fully abstract. 
It is easy to see that this is not the case since 
a spi calculus context can first buffer the encrypted packet and subsequently, whenever it
enters in posses of the decryption key, retrieve the confidential information; this
breaks equation (\ref{eq:inact}).
While a solution to recover the behavioral theory of $\pi$-calculus is
available~\cite{popl07}, the price to pay is  a complex cryptographic
protocol that relies on a set of trusted authorities acting as proxies.  

Based on  these considerations, in this paper we argue that the
restriction operator of $\pi$-calculus does not adequately ensure confidentiality. 
To tackle this problem, we introduce an operator to
program explicitly secret communications, called {\sf hide}.  
From a programming language point of view, the envisaged use of the
operator is for declaring secret a medium used for {\it local} inter-process
communication; examples include pipelines,  message queues and  IPC
mechanisms of  microkernels. 
The operator is static: that is, we assume that the scope of hidden
channels can not be extruded. The motivation is that all processes using a private channel
shall be included in the scope of its {\sf hide} declaration; processes outside
the scope represent another location, and must not interfere with the protocol. 
Since the {\sf hide} cannot extrude the scope of secret channels, we can use it to directly build specifications
that preserves forward secrecy.
In contrast, we regard the
restriction operator of the $\pi$-calculus, {\sf new}, as useful to create a new  
channel for message passing with scope extrusion, and which does not provide secrecy
guarantees. 

To emphasize the difference between {\sf hide} and {\sf new}, we introduce a
 \emph{spy} context that represents a side-channel attack on the non-dedicated channels.
In practice, 
 \emph{spy} is able to detect whether there has been a communication on one of the
channels not protected by a {\sf hide}, but 
 is not able to retrieve its content.

 \medskip\noindent{\bf {Contributions.}}
 We introduce the  {\it secret} $\pi$-calculus as an \review{orthogonal} extension of the
$\pi$-calculus with  an operator representing confidentiality ({\sf hide}). We 
develop its structural operational semantics and its observational theory.  In particular,
we provide a reduction semantics, a labelled
transition semantics and an observational equivalence.
We show that the observational equivalence induced by the reduction semantics coincides
by the labelled transition system semantics. 
To illustrate  the difference between {\sf hide} and {\sf new}, we shall also consider a
distinguished process context,  called \emph{spy}, representing a side-channel attack.
  
 \medskip\noindent{\bf Plan of the paper}
 In the next section we introduce the syntax and the reduction semantics of
 the secret $\pi$-calculus. 
 In Section~\ref{sec:observational} we present the observational equivalence, and a
characterization based on labelled transition semantics, that we show sound
 and complete.  
 In Section~\ref{sec:spy} we introduce the   {\it spy} process, and we extend
the reduction semantics and bisimulation method accordingly. 
 In Section~\ref{sec:examples} we discuss some algebraic 
 equalities and inequalities of the secret $\pi$-calculus, and we analyze some
interesting examples, notably an
implementation of  name matching,  and a deployment of mandatory access control.  
 Finally, Section~\ref{sec:discussion} presents related work and concludes. 
 An extended  version of the paper containing all proofs  is available
online~\cite{tech-report}.
 
\section{Secret $\pi$-calculus}
\label{sec:pi-calculus}

This section introduces the syntax and the semantics of our  calculus, 
 the {\it secret $\pi$-calculus}.  
The syntax of the processes in Figure~\ref{fig:syntax}  extends  
that of the  $\pi$-calculus~\cite{MPW92a,MPW92b} by:
(1) We consider two binding operators: 
{\sf new }, which -- as we will argue -- does not offer enough security 
guarantees, and {\sf hide}, which serves to program secrecy.
(2) We use two forms of restricted  pattern matching in input, so
that we can  deny a process to receive a (possibly empty) set of channels, or we can
enforce a process to receive only trusted channels. When in the first form the set of
channels is empty we have the standard input of $\pi$-calculus.   
%
\begin{figure}
  \begin{align*} 
    P,Q  \grmeq  & & \text{Processes:}  
    \\
    &  \RECEIVE x{y\forbids{ B}}P & \text{input} &&  \NR{x}(P) &&
\text{restriction}
    \\
    &  \RECEIVET x{y\accept{  A}}P& \text{trusted input}
    && \SR{x}[P] && \text{secrecy} 
    \\
    &  \SEND xyP & \text{output}      && \INACT && \text{inaction} 
    \\
     &P\PAR Q &  \text{composition}   && !P &&
\text{replication} 
  \end{align*}
  \caption{Syntax of the secret $\pi$-calculus}\label{fig:syntax}
  \end{figure}
We use an infinite set of names ${\cal N}$, ranged over by $a,b,\dots,x,y,z$, to represent
channel names and parameters, i.e. the subjects and the objects of communication, 
respectively.  
We let $A,B$ range over subsets of ${\cal N}$. 
A process of the form $\RECEIVE x{y\forbids B }P$ represents an input
where the name $x$ is the input channel name, 
$y$ is a formal parameter which can appear in the continuation $P$, 
and~$B$ is the set of {\it blocked} names that the process cannot receive.
On contrast, an input process of the form $\RECEIVET x{y\accept A }P$ 
declares the object names that the process can  {\it accept}: that is, 
the process accepts in input a name $z$ only if  $z\in A$. 
This permits to program security protocols where only trusted names can be received. 
The free and the bound names of such process are defined as follows: 
$\fv(x[y\forbids B].P)=(\fv(P)\setminus \{y\})\cup\{x\}\cup B$ and
$\bv(x[y\forbids B].P)=\{y\}\cup \bv(P)$, 
$\fv(x(y\accept A).P)=(\fv(P)\setminus \{y\})\cup\{x\}\cup  A $ and
$\bv(x(y\accept A).P)=\{y\}\cup \bv(P)$. 

Processes $\SEND xyP$, $\NR x(P)$, $P\mid Q$, $!P$, and $\INACT$ are the pi calculus
operators respectively describing an output of a name~$y$ over channel~$x$,
restriction of~$x$ in $P$, parallel composition, replication and inaction;
see~\cite{SanWalk01} for more details.  

The process $\SR x[P]$ represents a process $P$ in which the name $x$ is regarded as
\emph{secret}, and should not be accessible to any 
process external to $P$. $\SR x[P]$ binds   the occurrence of $x$ in $P$: 
$\fv(\SR x[P])=\fv(P)\setminus \{x\}$, and $\bv(\SR
x[P])=\{x\}\cup \bv(P)$.

Contexts  are  processes containing a hole~$-$. We write $C[P]$ for
the  process obtained by replacing~$-$ with $P$ in $C[-]$.
\begin{align*} 
C[-]   &\grmeq - \;\mid\;  C[-]\PAR P \;\mid\;  P\PAR C[-]  \;\mid\; \NR x[-]\;\mid\; \SR
x[-]
&& \text{contexts}
\end{align*} 

We  write $x(y).P$ as a short of $x(y\forbids \emptyset).P$, and omit curly brackets in
$x(y\forbids\{b\}).P$ and $x[y\accept\{a\}].P$. When no ambiguity is possible, we will
remove scope parentheses in $\NR x(P)$ and $\SR x[P]$.
We will often avoid to indicate trailing~$\INACT$s.

The combination of the accept and the block  construct  permits to design
processes which are not 
subject to interference attacks from the context. 
We note that their role is dual: the  accept operator prevents the reception (intrusion)
of untrusted names from the environment, and its use is specified by the programmer.  
The block mechanism prevents another process from sending (extruding) a secret name, 
and it is inserted automatically by the system to ensure the protection of such 
names. 
One may wonder whether we could have used just one form of (trusted) input, and declare
the names
to be blocked by accepting all names in $\cal N$ but the intended ones. The main reason
that guided our choice is that we believe that our form of input with blocked names can be
effectively implemented, for instance by using blacklists. Also, we think that there is a
nice symmetry among processes $\RECEIVE x{y\forbids B}P $ and $\NR xP$, and among
processes $\RECEIVET x{y\accept A}P $ and $\SR xP$.

We embed the block mechanism in the rules for structural congruence through the 
operation~$\uplus$ defined in Figure~\ref{fig:reduction}. {\it Blocked} names could indeed
be introduced both statically and dynamically, i.e. when structural congruence is
performed during the computation. We leave the time when the system blocks explicitly the
name in components as an implementation detail. 
Note that in  the second rule of the first line 
the name $b$ is guaranteed to be different from all the names in $A$, because in the
congruence rule for \emph{hide} (cfr same Figure) the free names of Q are
required to be different from the name we want to hide, so the alpha conversion should be
applied .


%
\begin{figure}[!t]
 \emph{Rules for blocking a name}
\begin{gather*}
(\RECEIVE x{y\forbids B}P) \uplus b  \eqdef \RECEIVE x{y\forbids B\cup\{b\}}(P\uplus b)
\qquad
(\RECEIVET x{y\accept A}P) \uplus b  \eqdef \RECEIVET x{y\accept A}(P\uplus b)  
\\
(\NR{x} (P)) \uplus b \eqdef \NR{x} (P\uplus b)^* 
\qquad  
(\SR{x} [P]) \uplus b  \eqdef \SR{x} [P\uplus b]^*  \quad(*)\, b\ne x
\\ 
(\SEND xyP)\uplus b\eqdef \SEND xy{(P\uplus b)}
\qquad 
(P\PAR Q)\uplus b  \eqdef P\uplus b \PAR  Q\uplus b
\\
(! P)\uplus b  \eqdef !(P\uplus b) 
\qquad
\INACT\uplus b\eqdef\INACT
\end{gather*} 
 \emph{Rules for structural congruence}
  \begin{gather*}
    P\PAR Q \equiv Q\PAR P
    \qquad
    (P\PAR Q)\PAR J \equiv P\PAR (Q\PAR J)
    \qquad
    !P \equiv P \PAR !P 
    \\
    \NR x(\INACT ) \equiv \INACT 
     \qquad
       \SR x[\INACT]\equiv \INACT 
    \\
         \NR x (P)\PAR Q \equiv \NR x(P\PAR Q) \quad x\not\in\fv(Q) 
    \\    
         \SR x [P]\PAR Q \equiv \SR x[P\PAR Q\uplus x] \qquad x\not\in\fv(Q)
    \\ 
     \NR x(\SR y [P])\equiv \SR y[\NR x ( P )]  \quad x\ne y
  \end{gather*}
 \emph{Reduction rules} 
 \vspace{-1.3em}
    \begin{gather*} 
    \tag*{\rcom}
    \frac{ 
    z\not\in B}
    {
    \RECEIVE x{y\forbids B}P \PAR \SEND  xzQ 
    \osred
      P\subs zy \PAR Q
    }
    \\
        \tag*{\rtcom}
    \frac{
    z\in A
    }
    {
    \RECEIVET x{y\accept A}P \PAR \SEND  xzQ  
    \osred
      P\subs zy  \PAR Q
    }
    \\
    \tag*{\rres,\rhide}
    \frac{
    P \osred P'  
    }{
    \NR{x}(P)  \osred\NR{x}(P')
    } 
    \qquad
     \frac{
      P \osred  P' 
    }{
     \SR{x}[P] \osred\SR{x}[P']
    }
    \\ 
   \tag*{\rpar,\rstruct}
   \frac{
      P \osred  P'
    }{
      P \PAR Q \osred P' \PAR Q
    }
    \qquad
       \frac{
      P \equiv Q \quad Q \osred Q'
     \quad Q' \equiv P'
    }{
      P \osred P'
    } 
\end{gather*}
\caption{Reduction semantics} 
\label{fig:reduction}
\end{figure}

Following standard lines, we define the semantics of our calculus
via a reduction
relation, also specified in Figure~\ref{fig:reduction}.
We assume a capture-free substitution operation $\subs zy$: 
the process $P \subs zy$ is obtained from $P $ by substituting all the free occurrences of $y$ 
 by $z$.
As usual, we use a structural congruence $\equiv$ to rearrange processes.
Such congruence includes the equivalence induced by  alpha-conversion, and the relations defined in  Figure~\ref{fig:reduction}.
The rules for the $\pi$-calculus operators (first line) are the standard ones.
The rules for inaction under a binder follow (second line).
We recall that the scope extrusion rule for  \keyword{new} (third line) permits to
enlarge the scope of a name and let a process receive it. 
In contrast, the scope extrusion rule for  \keyword{hide} (fourth line) permits to 
enlarge the scope of a name, but at the same time it sets the name to \emph{blocked} for
the process which are being included in the scope, thus preventing them  to receive the
name. The last rule (fifth line) permits to swap the two binders. 
  
The first rule for reduction, \rcom, says that an input process of the form
 $\RECEIVE x{y\forbids B}P$ is allowed to synchronize with an output process $\SEND
xzQ$ and receive the name~$z$ provided that~$z$ is not \emph{blocked}
($z\not\in B$). 
The result of the synchronization is the progression of both the receiver and the sender,
where the formal parameter in the input's continuation is replaced by the name~$z$.  Note
that whenever $B=\emptyset$ we have the standard communication rule of the $\pi$-calculus.
The main novelty is represented by the rule for trusted communication 
 \rtcom. This rule says  that an output process can send a
name $z$ over $x$ to a parallel process waiting for input on $x$, provided that $z$ is
explicitly declared as accepted ($z\in A$) by the receiver. If this is the case, the name
will  replace the occurence of the formal parameter in the input's continuation.  
Rules \rres and \rhide are  for   {\sf new} and for  
{\sf hide}  respectively,  and follow  the same schema.  
The rules for parallel composition, replication and incorporating structural congruence
are standard. 

We let $P \Rightarrow P'$ whenever either (a) $P \rightarrow \cdots \rightarrow P'$,  or 
(b) $P'=P$.

\begin{example}\label{example:secrecy}
We show  how   {\sf hide} can be used to prevent the extrusion of a secret.
Consider the                                                   
process:\vspace{-.5em}
\[ 
P \eqdef  \SR {z}[ \SENDn xv ]      \qquad x\ne z
\]
The process $\SENDn xv$ can be interpreted as an internal attacker trying to
leak the name $v$ to a context  $C[-]\eqdef -\PAR x(y).\SENDn{leak}y $. 
By using the  structural rule for enlarging the scope of hide in
Figure~\ref{fig:reduction} we infer that $C[P] \equiv \SR {z}[ \SENDn xv \PAR
x(y\forbids
z).\SENDn{leak}y]$.
Whenever the name $v$ is not declared secret, that is whenever $v\ne z$, the leak
cannot be prevented: by applying \rcom,\rhide, and \rstruct we have $C[P]\osred
\SENDn{leak}v $.
Conversely, when the name $v$ is protected by {\sf hide}, that is $v=z$,  we do
not have any interaction  and  secrecy is preserved.
\end{example} 

\begin{example}\label{example:accepted}
The combined use of  the accept and block sets permits to avoid
interference with the context. 
Consider the process below,  where $n>0$:\vspace{-.6em}
\begin{align*}
P &\eqdef \SR {z_1} \cdots \SR {z_n}[\cdots[x[y\accept Z].P \PAR \SENDn
x{z_i} ]\cdots] 
&&
Z \subseteq \{z_1,\cdots z_n\},i\in \{1,\dots,n\}  
\end{align*}
Take a context $C[-] \eqdef -\PAR \NR y  !\SENDn xy   \PAR 
!x(w)$.  Such context is unable to send the fresh name $y$ to $P$,
because the input process in $P$  is programmed to accept only  trusted names protected
by {\sf hide}. Dually, the context  cannot
receive the  protected name~$z_i$.  
Therefore $C$ and~$P$ cannot interact: $C[P]\osred Q$ implies that a)  
$Q\equiv C[\SR {z_1} \cdots \SR
{z_n}[\cdots[P\subs {z_i}y ]\cdots]]$  or b) $Q\equiv C[P]$.
\end{example}

\section{Observational equivalence}
\label{sec:observational}

In this section we define a notion of behavioral equivalence based on observables, or
barbs. As the reader will notice, a distinctive feature of our observational theory is
that trusted inputs are visible only under certain conditions, namely that the context
knows at least a name that is declared as accepted. Conversely, processes trying to send
a name protected by an hide declaration are not visible at all.
The choice to work in a synchronous setting
permits us to emphasize the differences among our theory and that of
$\pi$-calculus. However, the same results would hold for a secret asynchronous
$\pi$-calculus, while the contrast would be less explicit as input barbs would not be
observable.

We say that a name $x$ is  bound in $P$ if $x\in\bv(P)$.
An occurrence of   $y$ is  hidden in $P$ if such occurrence of $y$ appears
in the scope of a {\sf hide} operator in~$P$.

\begin{definition}[Barbs] 
We define: 
\begin{itemize}
  \item $P\barb_{x}$ whenever $P\equiv C[\RECEIVET x{y\accept A} Q]$  with $x$ not
bound in $P$ and $A\cap\bv(P)\ne A$, or whenever 
  $P\equiv C[\RECEIVE x{y\forbids B} Q]$ with $x$ not bound in $P$. 
 \item $P\barb_{\overline x}$ whenever    
 $P\equiv C[\SEND xyQ]$ with $x$ not bound in $P$ and~$y$ not hidden in~$P$.
\end{itemize}
\end{definition} 

\noindent
Based on this definition,  we have that  
$P_1\eqdef\SR x \RECEIVET z{y\accept x}Q$,    
$P_2\eqdef\NR x \RECEIVE x{y\forbids B}Q$, and
$P_3\eqdef \RECEIVET z{y\accept \emptyset}Q$ do  not exhibit a barb $z$, written
$P_i \nbarb_z$ for  $i=1,2,3$.  
In contrast, when $x\ne z$ and $A\cap \{x\}\ne\emptyset$ 
we have  that $\NR x \RECEIVET
z{y\accept
A}P \barb_z$, and when $x\ne z$ we have $\SR x \RECEIVE z{y\forbids B}P$. 
Whenever $P\eqdef \SR y \SEND xvQ$ with $y\ne x$, we have $P\barb_{\overline x}$ if
$y\ne v$, and $P\nbarb_{\overline x}$ otherwise.
Weak barbs are defined by ignoring reductions. 
We let  $P\Barb_{x}$
whenever $P\Rightarrow P'$ and $P'\barb_{x}$; similarly $ P\Barb_{\overline x}$ whenever $
P\Rightarrow P'$ and $P'\barb_{\overline x }$. 

Following the standard definition of observational equivalence, we are aiming at an 
equivalence relation that is sensitive to the barbs,  is closed under reduction, and is
preserved by certain contexts.      

\begin{definition}[Barb preservation]
A  relation $\CR$ over processes is barb preserving if 
$P\CR Q$, $P\barb_{x}$ implies $Q\Barb_{x}$, and 
  $P\barb_{\overline x}$ implies $Q \Barb_{\overline x}$.
\end{definition}

The requirement of reduction closure is to ensure that the processes maintain their
correspondence through the computation. 

\begin{definition}[Reduction closure]
A  relation $\CR$ over processes is reduction-closed if $P\CR Q$
and $P\rightarrow P'$ implies that $Q\Rightarrow Q'$ and 
$P'\CR Q'$.
\end{definition}

We require contextuality  with respect to the parallel composition,  the new and  the
hide operators (cf. Section~\ref{sec:pi-calculus}).  

\begin{definition}[Contextuality]
A relation $\CR$ over processes is contextual if $P\CR Q$ implies
$C[P] \CR C[Q]$. 
\end{definition}

\begin{definition}[Observational equivalence]\label{def:obs-equivalence}
Observational equivalence, noted $\cong$, is the largest symmetric  
relation over processes  which is barb preserving, reduction closed
and contextual.  
\end{definition} 

Observational equivalence is difficult to establish since it requires
quantification over contexts. In the next section we will introduce labelled transition
semantics for the secret $\pi$-calculus, and show that the induced bisimulation coincides
with observational equivalence. Besides the theoretical interest, this will be also
of help in  proving that two processes are observationally equivalent. 
 
\subsection{Characterization} 
\label{sec:bisimulation}
\begin{figure}[t] 
\begin{gather*}
\tag*{\linp,\linpt}
\frac{
z\not\in B
}
{
 x(y\forbids B).P\lts{ x(z)} P\subs zy 
}
\qquad
\frac{
z\in A
}
{
 \RECEIVET x{y\accept A} P \lts{ x(z)} P\subs zy 
} 
\\[2mm]
\tag*{\lout,\lopen}
\frac{}
{
 \SEND xyP \lts{ \SENDn xy} P 
}\qquad
\frac{
P \lts{ \SENDn xy} P'
\qquad
y\ne x 
}{
\NR y  P \lts{ (y)\SENDn xy} P'
}
\\[2mm]
\tag*{\lcom}
\frac{
P\lts{ x(y)} P'
\qquad
Q \lts{ \SENDn xy} Q'
}{
P\PAR Q\lts{ \tau}  P'\PAR Q' 
}
\\[2mm]
\tag*{\lclose}
\frac{
P\lts{ x(y)} P'
\qquad
Q \lts{ (y)\SENDn xy} Q'
\qquad
y\not\in\fv(P)  
}{
P\PAR Q\lts{ \tau}  \NR{y}(P'\PAR Q')
}
\\[2mm]
\tag*{\lres,\lhide}
\frac{
P \lts{ \alpha} P'
\qquad
x\not\in\fv(\alpha)
}{
\NR x P \lts{ \alpha} \NR x P'
} 
\qquad
\frac{
P \lts{ \alpha}  P'
\qquad
x\not\in \fv(\alpha)
}{
\SR x P \lts{ \alpha} \SR x P'
} 
\\[2mm]  
\tag*{\lpar,\lrepl}
\frac{
P\lts{ \alpha} P'
\qquad
\bv(\alpha)\cap\fv(Q)=\emptyset
}{
P\PAR Q \lts{ \alpha}P'\PAR Q
}
\qquad
\frac{
P\lts\alpha P'
  }{
P\lts{ \alpha}   P'\PAR  !P  
}
\end{gather*}
\caption{Labelled transition system}
\label{fig:lts}
\end{figure}

The characterization relies on labelled transitions of the form $P\lts\alpha P'$, 
where $\alpha$ is one of
the following actions:
\[
\alpha =   x(z)\mid  \SENDn xz  \mid  (z)\SENDn xz 
\mid \tau  
\]
We let $\fv(x(z))=\{x\}$, $\fv(\SENDn xz)=\{x,z\}$, and 
$\fv{(z)\SENDn xz}=\{x\}$. 
We define $\bv(x(z))=\{z\}$, $\bv(\SENDn xz)=\emptyset$ and
$\bv( (z)\SENDn xz)=\{z\}$. We let $\fv(\tau)=\emptyset=\bv(\tau)$.

The transitions are defined by the rules in Figure~\ref{fig:lts}. 
Action $ x(z)$ represents the receiving of a name $z$ on a
channel $x$. In rule \linp, a process
of the form $x(y\forbids B).P$ can receive a value $z$ over $x$,  provided that $z$ is is
not blocked ($z\not\in B$). The received name will
replace the formal parameter in the body of the continuation. 
Rule \linpt describes a trusted input, that is a process
of the form $\RECEIVET x{y\accept A}P$ that receives a variable $z$ over $x$ whenever 
$z$ is accepted ($z\in A$); the variable~$z$ will replace all occurrences of~$y$ in $P$.
The action $ \SENDn xy$ represents the output of a   name $y$ over
$x$. This move is performed in \lout by the process $\SEND xyP$ and leads to the
continuation $P\rhd B$. 
Communication arises in rule \lcom by means of a $\tau$ action obtained by a
synchronization of an $ x(y)$ action with a $ \SENDn xy$ action.
Action $ (y)\SENDn xy$ is fired when the name $y$ sent over $x$ is bound
by the {\sf new } operator and its scope is opened by using  rule \lopen. 
The scope of the {\sf new } is closed by using  rule  \lclose.
In this rule the scope of a name $y$ sent over $x$  is  
enlarged to include a process which executes a dual action $ x(y)$, 
giving rise to a synchronization of the two threads depicted by an action $\tau$. 
Rule \lres is standard for restriction. 
Rule \lhide says that process $\SR x P $  performs an action $\alpha$ inferred
from $P$, provided that the $\alpha$ does not contain $x$.
Therefore extrusion of hidden channels is not possible, as previously discussed; note
indeed that this the unique rule applicable for \emph{hide}. 
Rule \lrepl performs a replication. 
 
%
We have a standard notion of bisimilarity; in the following, we let $\Lts{\tau}$ be
the reflexive and transitive closure of $\lts{\tau}$.

\begin{definition}[Bisimilarity]\label{def:bisimulation}
A symmetric relation $\CR$  over processes is a  bisimulation if
whenever $P\CR Q$  and $P\lts{\alpha} P'$ then there exists 
a process $Q'$ such that 
$Q \Lts{\tau}\lts{\hat\alpha}\Lts{\tau}Q'$ and 
$P'\CR Q'$  where $\hat\tau$ is the empty string and
$\hat\alpha=\alpha$ otherwise.  
Bisimilarity, noted $\approx$, is the largest bisimulation.
\end{definition}

The following result establishes that bisimilarity can be used as a proof technique for
observational equivalence; the proof is by coinduction and relies on the closure of
bisimilarity under the {\sf new}, {\sf hide} and parallel composition operators. 
 
\begin{proposition}[Soundness]\label{prop:soundness}
If $P\approx Q$ then  $P\cong Q$. 
\end{proposition}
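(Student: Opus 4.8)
The plan is to show that bisimilarity $\approx$ satisfies the three closure
properties defining observational equivalence $\cong$ --- barb preservation,
reduction closure, and contextuality --- and then conclude by maximality of
$\cong$. Since $\approx$ is symmetric by definition, these three properties are
exactly what is needed.

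First I would establish the \emph{operational correspondence} between the
reduction semantics and the labelled transition system: namely that
$P \rightarrow P'$ iff $P \lts{\tau} P''$ for some $P''$ with $P'' \equiv P'$,
and more generally that $\lts{\tau}$ is compatible with $\equiv$. This requires
checking that each structural congruence rule (including the scope-extrusion
rules for \keyword{new} and \keyword{hide}, and the rule for $\uplus$) is
matched by the $\tau$-transitions, and conversely that every $\tau$-move arises
from \rcom\ or \rtcom\ up to $\equiv$. The delicate points are the interaction
of the block operation $\uplus$ with $\lts{x(z)}$ (a process $P \uplus b$ loses
exactly the input transition $P \lts{x(b)}$) and the \lclose/\lopen\ pair
reconstructing the \keyword{new}-extrusion congruence rule. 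From this
correspondence, reduction closure of $\approx$ is immediate: if $P \approx Q$
and $P \rightarrow P'$, then $P \lts{\tau} P''$ with $P'' \equiv P'$, so
$Q \Lts{\tau} Q'$ with $P'' \approx Q'$, hence $Q \Rightarrow Q'$ (using the
correspondence in the other direction) and $P' \approx Q'$ after noting that
$\approx$ contains $\equiv$.

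Next, barb preservation. Here I would prove that the barb predicates are
definable in terms of transitions: $P \barb_{\overline{x}}$ iff
$P \lts{\SENDn{x}{y}} P'$ or $P \lts{(y)\SENDn{x}{y}} P'$ for some $y, P'$ with
$y$ not hidden (for the bound-output case the \lhide\ side-condition already
guarantees this), and $P \barb_x$ iff $P \lts{x(z)} P'$ for some $z, P'$ ---
the condition $A \cap \bv(P) \ne A$ in the trusted-input barb corresponds
precisely to there being an acceptable name $z$ not captured by a binder, which
is exactly when \linpt\ can fire visibly. Given these characterisations, if
$P \approx Q$ and $P \barb_x$ then $P \lts{x(z)} P'$, so
$Q \Lts{\tau} \lts{x(z)} \Lts{\tau} Q'$, whence $Q \Barb_x$; similarly for
output barbs.

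Finally, contextuality, which I expect to be the main obstacle. By induction on
the context $C[-]$ it suffices to show that $\approx$ is preserved by the three
context-forming operations: $P \approx Q$ implies $P \mid R \approx Q \mid R$,
$\NR{x} P \approx \NR{x} Q$, and $\SR{x} P \approx \SR{x} Q$. For the last two I
would exhibit the candidate relations
$\{(\NR{x} P, \NR{x} Q) : P \approx Q\}$ and
$\{(\SR{x} P, \SR{x} Q) : P \approx Q\}$ and check the bisimulation game using
rules \lres\ and \lhide\ --- routine, since these rules just carry an action
$\alpha$ with $x \notin \fv(\alpha)$ up through the binder, and $\lts{\tau}$ is
always permitted. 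The parallel case is the heart of the proof: the candidate is
$\{(P \mid R, Q \mid R) : P \approx Q\}$, and a move of $P \mid R$ is either a
move of $R$ alone (trivial), a move of $P$ alone (handled by \lpar, using the
side-condition $\bv(\alpha) \cap \fv(R) = \emptyset$ obtained by
alpha-conversion), or a synchronisation via \lcom\ or \lclose. The synchronisation
cases force me to close the candidate relation under injective name substitution
(for \lcom) and under the $\NR{y}(-)$ wrapper (for \lclose), so I would instead
work with a slightly larger relation --- closed under application of the
\keyword{new} operator and under substitutions $\subs{z}{y}$ on both components
--- and prove \emph{that} is a bisimulation. Proving $\approx$ is closed under
injective substitution is the key auxiliary lemma; it goes by the usual argument
that $P \lts{\alpha} P'$ implies $P\sigma \lts{\alpha\sigma} P'\sigma$ and
vice versa for injective $\sigma$, so $\{(P\sigma, Q\sigma) : P \approx Q,\ \sigma
\text{ injective}\}$ is a bisimulation. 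With these closure properties in hand, the
three defining conditions of $\cong$ hold for $\approx$, so $\approx\,\subseteq\,\cong$.
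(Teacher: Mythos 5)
Your proposal follows essentially the same route as the paper, which proves soundness by coinduction relying precisely on the closure of $\approx$ under the \keyword{new}, \keyword{hide} and parallel-composition operators (the detailed argument being deferred to the technical report). Your elaboration --- operational correspondence for reduction closure, a transition-based reading of barbs, and a candidate relation closed under injective substitution and restriction to handle \lcom/\lclose --- is the standard way to discharge exactly those obligations, so the approach is consistent with the paper's.
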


To prove the reverse direction, namely that behaviourally equivalent processes are
bisimilar, we follow the approach of Hennessy~\cite{Hen07} and proceed by co-induction
relying on contexts $C_\alpha$ which emit the desired barbs whenever they interact
with a process $P$ such that $P\Lts{\alpha} P'$, and vice versa. 
Perhaps interestingly, we can program a context to check if
a given name is fresh even if our syntax does not include  a matching construct 
(cf.~\cite{Hen07,BorealeS98}).
In Section~\ref{sec:examples} we will show that in the secret $\pi$-calculus the process 
$\IF{x=y}PQ$ can be derived. 

\begin{proposition}[Completeness]
\label{prop:completeness}
If $P\cong Q$ then $P\approx Q$. 
\end{proposition}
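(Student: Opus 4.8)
The plan is to prove completeness by contraposition-free coinduction: exhibit the relation $\CR \;=\; \cong$ itself and show it is a bisimulation, i.e.\ that whenever $P\cong Q$ and $P\lts{\alpha}P'$ there is $Q'$ with $Q\Lts{\tau}\lts{\hat\alpha}\Lts{\tau}Q'$ and $P'\cong Q'$. The case $\alpha=\tau$ is immediate from reduction closure of $\cong$, since $P\lts{\tau}P'$ coincides with $P\rightarrow P'$ (up to $\equiv$). So the work is entirely in the visible actions $\alpha\in\{x(z),\SENDn xz,(z)\SENDn xz\}$, and the standard device is to design, for each such $\alpha$, a \emph{testing context} $C_\alpha[-]$ that (i) when composed with a process capable of an $\alpha$-transition produces, after some reductions, a characteristic fresh barb, and (ii) conversely, the only way $C_\alpha[P]$ can reach that barb is by $P$ performing an $\alpha$ (weakly) with the matching subject/object. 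Contextuality of $\cong$ then gives $C_\alpha[P]\cong C_\alpha[Q]$, barb preservation transports the characteristic barb to the $Q$ side, and the reverse reading of (ii) extracts the required weak transition of $Q$; reduction closure is used along the way to line up the residuals so that $P'\cong Q'$.

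Concretely I would proceed as follows. \textbf{Step 1: fresh-name generation.} Since the calculus has no match operator, first justify (forward-referencing Section~\ref{sec:examples}, as the authors do) that a context can test name-freshness; equivalently, pick names $w,t,\dots$ not free in $P$ or $Q$ and use them as tokens. \textbf{Step 2: the input case $\alpha=x(z)$.} Take $C_{x(z)}[-] \eqdef - \PAR \SEND xz{(\SENDn t{})}$ with $t$ fresh, so that $C_{x(z)}[P]\Barb_{\overline t}$; conversely, because $\overline t$ appears nowhere else and the only producer of it is guarded by the output on $x$ carrying $z$, reaching $\overline t$ forces $P$ to have absorbed a $z$ on $x$, i.e.\ $P\Lts{x(z)}P'$ with the residual of the test being $\approx$-irrelevant. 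Here one must be careful about the two input constructs: a trusted input barb is visible only when the context knows an accepted name, but the test already supplies such a name ($z$ itself), so both $\linp$ and $\linpt$ are covered uniformly. \textbf{Step 3: the free-output case $\alpha=\SENDn xz$.} Dually take $C_{\SENDn xz}[-]\eqdef - \PAR \RECEIVE x{y\forbids\emptyset}{(\IF{y=z}{\SENDn t{}}{\INACT})}$, using the derived match; reaching $\overline t$ forces $P$ to emit exactly $z$ on $x$, and $z$ is visible (not hidden) precisely because $P\barb_{\overline x}$-style reasoning applies. \textbf{Step 4: the bound-output (extrusion) case $\alpha=(z)\SENDn xz$.} This needs the sharpest test: a context that receives on $x$, then checks that the received name is \emph{fresh} — distinct from every name it already knew — before emitting $\overline t$. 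The derived conditional from Section~\ref{sec:examples} is exactly what lets us write this. One also has to confront the hide/new asymmetry: extrusion can only involve a $\keyword{new}$-bound name, never a $\keyword{hide}$-bound one (rule $\lhide$ forbids it), and the barb definition correspondingly makes outputs of hidden names invisible; so the test must, and does, only ever succeed against genuine $\NR z$-extrusions, keeping the correspondence tight. \textbf{Step 5: assembling the bisimulation.} For each case, from $P\cong Q$ and contextuality get $C_\alpha[P]\cong C_\alpha[Q]$; from $P\lts\alpha P'$ deduce $C_\alpha[P]\Barb_{\overline t}$, hence $C_\alpha[Q]\Barb_{\overline t}$; by the ``only if'' direction of the test's correctness, $Q\Lts{\tau}\lts{\hat\alpha}\Lts{\tau}Q'$; finally use reduction closure of $\cong$ on the matched reduction sequences inside the composed processes, together with the fact that the leftover test fragments are structurally congruent (or $\cong$-equivalent) on both sides, to peel them off and conclude $P'\cong Q'$. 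Then $\cong$ satisfies the clauses of Definition~\ref{def:bisimulation}, so $\cong\,\subseteq\,\approx$, which is the claim.

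The main obstacle, I expect, is Step~4 together with the bookkeeping in Step~5. For Step~4 one must build a context that correctly distinguishes a freshly extruded name from all the (finitely many) names already in play, using only the derived match and the block/accept machinery — getting this right means the derived $\IF{x=y}{P}{Q}$ must itself be robust under the presence of $\keyword{hide}$ and blocked-name sets, and one must check the extruded name does not accidentally clash with a blocked name introduced by $\uplus$ along a $\SR x[P]\PAR Q$ congruence step. For Step~5 the delicate point is that after the testing context has fired its barb there is residual ``junk'' (spent outputs, consumed inputs, a possibly-restricted leftover), and one needs that this junk is the \emph{same} on the $P$- and $Q$-sides up to $\cong$, so that stripping it preserves the relation; this is where closure of $\cong$ under the three operators (Proposition~\ref{prop:soundness}'s ingredients, reused here) and a small ``garbage-collection'' lemma — e.g.\ $\NR x(P\PAR \mathit{spent}) \cong P$ when $x\notin\fv(P)$ and $\mathit{spent}$ is inert — do the real work. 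Everything else is the routine case analysis on the shape of $\alpha$ and on which input/output barb is in play.
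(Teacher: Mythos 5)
Your overall architecture is the paper's: take $\CR\;=\;\cong$, dispatch $\alpha=\tau$ via reduction closure plus a lemma matching reductions with $\tau$-actions, and for visible $\alpha$ build a definability context $C_\alpha$, transport its success barb across $\cong$ by contextuality and barb preservation, then strip the spent tester. Your input context is literally the paper's $C^A_{x(y)}$. Where you diverge --- and where the proposal breaks --- is the output cases, which you route through the derived $\IF{y=z}{P'}{Q'}$ placed under an input prefix on $x$. The paper's own footnote warns that this encoding is not robust under input prefixing because it is sensitive to aliasing, and the problem is fatal for your Step~4. Unfold $\encSQA{\IF{y_0=a}{\INACT}{R}}$ with $y_0$ the input-bound variable: the encoding installs a trusted input $a[w\accept k]$ on one argument together with an ``else''-sender of $k$ on \emph{every} free name other than the first argument --- in particular a thread $\SEND ak{(R\uplus k)}$ on $a$ itself (or, with the arguments swapped, a thread on $y_0$ that after the substitution $y_0:=v$ sits on $v$ and meets $v[w\accept k]$). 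Either way the else branch is reachable \emph{whatever} $v$ is. Consequently your freshness cascade can emit $\bar t$ even when $Q$ has performed a free output of a known name, and the ``only if'' direction of the bound-output test --- exactly the direction you need in order to extract $Q\Lts{(z)\SENDn xz}Q'$ --- fails. (Your Step~3 survives the same race only by accident: there $\bar t$ sits in the \emph{then} branch and weak barbs are existential.)

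The paper avoids programming a positive mismatch altogether. Its single output context $-\PAR\SR k[x(z).(z[w\accept k]\PAR\SENDn\omega{})\PAR_{i\in I}\SEND{a_i}k{\SENDn{\psi_i}{}}]$ puts the trusted input accepting only the hidden token $k$ on the \emph{received} name, and a $k$-sender guarding a distinct signal $\bar{\psi_i}$ on each statically known name $a_i\in\fv(P)\cup\fv(Q)$; since $k$ is hidden and blocked in the tested process, $\bar{\psi_l}$ is observable iff the received name equals $a_l$. Object identity is then read off from \emph{which} $\psi$-barb appears, and freshness from the \emph{absence} of all of them --- legitimate because $\cong$ is symmetric and so preserves non-barbs as well as barbs. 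If you insist on a positive success barb for freshness, the workable device in this calculus is not the derived match but the blocked-name input itself: forward the received $v$ on a fresh restricted channel $c$ to $c(w\forbids A).\SENDn t{}$, which fires exactly when $v\notin A$. One of these repairs is needed before your Step~5 (whose cancellation and garbage-collection obligations you identify correctly, and which the paper also leaves at the level of ``$P'\PAR C'_y\cong Q'\PAR C'_y$ implies $P'\cong Q'$'') can go through.
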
 
\begin{proof}
Let $P\CR Q$ whenever  $P\cong Q$ and assume that
$P\lts\alpha P'$. We show that there is $Q'$ such that $Q\Lts{\hat\alpha} Q'$
and $P'\equiv \CR \equiv Q'$; this suffices to prove that $\CR$ is included in
observational equivalence (cf.~\cite{SangiorgiM92}).
Whenever $\alpha=\tau$, we use reduction-closure of $\cong$ to find $Q'$ such that
$Q\Lts{} Q'$ with $P'\cong Q'$. By relyng on a lemma that establishes that
reductions correspond to $\tau$ actions, we infer that $Q\Lts{\tau} Q'$, which is the
desired result since $P'\CR Q'$. Otherwise assume $\alpha\ne\tau$.
We exploit contextuality of $\cong$ and infer that
$C^A_\alpha[P]\cong C^A_\alpha[Q]$ where we let $A=\fv(P)\cup\fv(Q)$ and $C^A_\alpha$ be
defined below. We let $A=\{a_1,\dots,a_n\}$, $I=1,\dots,n$, with $n\geq 1$, and
assume names $\omega,\psi_1,\dots,\psi_n$ such that
$\{\omega,\psi_1,\dots,\psi_n\}\cap A=\emptyset$.  
\begin{align*}
C^A_{x(y)}[-] &\eqdef  - \PAR \SEND xy\SENDn \omega{}  \\
C^A_\alpha[-] &\eqdef  
- \PAR \SR k[x(z).( z[w\accept k]  \PAR \SENDn \omega{}) 
\PAR_{i\in I} \SEND  {a_i}k\SENDn{\psi_i}{}]
&& \alpha=x\oput y,(y)x\oput y\\
C'_y&\eqdef \SR k [y[w\accept k]\PAR \SENDn \omega{}\PAR_{i\in I} \SEND 
{a_i}k\SENDn{\psi_i}{}] &&\forall i\in I\,.\, y\ne a_i 
\\
C''_y &\eqdef \SR k [\SENDn \omega{} \PAR
\SENDn{\psi_l}{}\PAR_{i\in I\less l}\SEND{a_i}k\SENDn{\psi_i}{} ]
&&a_l=y
\end{align*} 
Assume $\alpha=\SENDn xy$. We have that there is $a_l\in A, a_l=y$ such that
$C^A_\alpha[P]\Lts{}\equiv C_P\eqdef P'\PAR C''_y$. We find
a process $C_Q$ such that
$C^A_\alpha[Q]\Lts{}C_Q\cong C_P $.
Since $C_P\barb_{\bar \omega}, \barb_{\bar \psi_l}$, this implies that  
$C_Q\Barb_{\bar \omega}, \Barb_{\bar \psi_l}$. Therefore the weak barb  
$\bar \omega$  of $C_Q$  has been unblocked since $Q$ emits a weak action $\alpha'$ with
$x$ as subject. Moreover, the object of $\alpha'$ is~$y$, that is $\alpha'=\alpha$,
because of the weak barb  $\bar \psi_l$. Indeed the thread $\SEND {a_l}k\SENDn
{\psi_l}{}$ with $a_l=y$ can be unblocked only by $y[w:k]$, because $k$ is protected by
the {\sf hide} declaration. 
Therefore there is $Q'$ such that $Q\Lts{\alpha} Q'$ and $C_Q\cong Q'\PAR C''_y$.
We conclude by showing that this implies $P'\cong Q'$, and in turn $P'\CR Q'$, as
requested. 
Assume $\alpha=(y)x\oput y$. We have that $C^A_\alpha[P]\Lts{}\equiv C_P\eqdef P'\PAR
C'_y$. Since $y$ is fresh we have that  $a_i\ne y$ for all $a_i\in A$. Therefore
$C_P\nBarb_{\bar \psi_i}$ for all $i\in I$, because $k$ is protected by {\sf hide}.
We easily obtain that there is $C_Q$ such that $C^A_\alpha[Q]\Lts{} C_Q\cong C_P$ with
$C_Q\Barb_{\bar\omega},\nBarb_{\bar \psi_i}$, for all $i\in I$. This let us infer that 
there is $Q'$ such that $Q\Lts{\alpha} Q'$ and $C_Q\cong Q'\PAR C'_y$, and the result then
follows by showing that $P'\PAR C'_y\cong Q'\PAR C'_y$ implies $P'\cong Q'$.
\end{proof} 

Full abstraction is obtained by
Propositions~\ref{prop:soundness} and~\ref{prop:completeness}.

\begin{theorem}[Full Abstraction]
\label{theor:character}
$\cong\ =\ \approx$. 
\end{theorem}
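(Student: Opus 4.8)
The plan is to obtain the theorem as an immediate consequence of the two
preceding propositions, so the real content is already in place. First I would observe
that Proposition~\ref{prop:soundness} gives the inclusion $\approx\ \subseteq\ \cong$,
since it states exactly that $P\approx Q$ implies $P\cong Q$. Second, I would invoke
Proposition~\ref{prop:completeness}, which gives the reverse inclusion
$\cong\ \subseteq\ \approx$: whenever $P\cong Q$ we have $P\approx Q$. Putting the two
set-theoretic inclusions together yields $\cong\ =\ \approx$, which is the statement of
the theorem.

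\begin{proof}
Immediate from Propositions~\ref{prop:soundness} and~\ref{prop:completeness}:
the former gives $\approx\ \subseteq\ \cong$ and the latter gives
$\cong\ \subseteq\ \approx$.
\end{proof}

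Since the theorem itself is a one-line corollary, the genuinely hard work lies in the two
propositions it rests on, and it is worth saying where the difficulty is concentrated.
Proposition~\ref{prop:soundness} is proved by coinduction, exhibiting that
$\approx$ (closed under the relevant contexts) is barb preserving, reduction closed and
contextual; the crux there is the closure of bisimilarity under the \keyword{new},
\keyword{hide} and parallel composition operators, which must be checked by a case
analysis on the last rule used in each transition — the \keyword{hide} case being the
interesting one, since rule \slhide is the only rule applicable and it blocks extrusion.
Proposition~\ref{prop:completeness} is the harder direction: it requires constructing,
for each non-$\tau$ action $\alpha$, a distinguishing context $C^A_\alpha$ whose weak
barbs detect precisely whether a process can perform $\alpha$, and then arguing that
observational equivalence of the instrumented processes forces a matching weak action
with the same subject \emph{and} the same object. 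The main obstacle in that argument is
pinning down the object of the matched action: this is where the auxiliary contexts
$C'_y$ and $C''_y$ and the \keyword{hide}-protected key $k$ do the work, using the fact
that the trusted-input/accept mechanism can only be unblocked by a name protected by
\keyword{hide}, which is what lets the context effectively test name identity despite the
absence of a primitive matching construct. Once both propositions are in hand, the
Full Abstraction theorem follows with no further effort.
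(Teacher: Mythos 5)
Your proof is correct and matches the paper exactly: the paper likewise obtains Theorem~\ref{theor:character} as an immediate consequence of Propositions~\ref{prop:soundness} and~\ref{prop:completeness}, one inclusion from each. Your accompanying remarks about where the real work lies (the contextual closure of $\approx$ for soundness, and the distinguishing contexts $C^A_\alpha$, $C'_y$, $C''_y$ with the \keyword{hide}-protected key $k$ for completeness) also accurately reflect the paper's development.
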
 
\section{Distrusting communications protected by restriction}
\label{sec:spy} 

In this section we introduce a {\em spy} process that 
represents a side-channel attack against communications that  occur on untrusted
channels, that is: channels that are not protected by  {\sf hide}. 
We assume that the spy is not able to retrieve  the content of an exchange.
The spy abstraction models the ability of the context to detect  interactions
when the processes are implemented  by means of  network protocols which do not rely on 
dedicated channels, and therefore require some mechanism to enforce the secrecy of the
message (e.g. cryptography). 
This ability  leads to break some of the standard security
equations for 
the {\sf new} operator, which can be recovered by re-programming the protocol and making  
use of the {\sf hide} operator.
We  add to the syntax of the secret $\pi$-calculus the following process where we
let \keyword{spy} be a reserved keyword. We let $P,Q,R$ to range over \emph{spied
processes}. 

  \begin{align*}  
  P,Q,R   &\grmeq  \cdots \;\mid\;  \spyb P && \text{spied processes}\\
  S &\grmeq \{x\} \;\mid\; \emptyset            && \text{spied set}
  \end{align*}

   \begin{figure}[!t]
    \emph{New rules for blocking a name}
    \begin{gather*} 
    (\spyb P)\uplus b\equiv \spyb (P\uplus b)  
    \end{gather*}
    \emph{New rules for structural congruence}
    \begin{gather*} 
     \NR x (P)\PAR \keyword{spy}.R  \equiv 
    \NR x(P\PAR \keyword{spy}\accept  x .R) \qquad x\not\in\fv(\keyword{spy}.R)
    \\
     \NR x (P)\PAR \keyword{spy}\accept  y.R  \equiv 
    \NR x(P\PAR \keyword{spy}\accept  y .R) \qquad x\not\in\fv(\keyword{spy}\accept  y .R)
    \\
     \SR x [P]\PAR \spyb R \equiv 
    \SR x[P\PAR (\spyb R)\uplus x] \qquad x\not\in\fv(\spyb R)
    \end{gather*}
    \emph{New reduction rules}  
    \begin{gather*} 
    \tag*{\rscom}
    \frac{ 
    z\not\in B}
    {
    \RECEIVE x{y\forbids B}P \PAR \SEND  xzQ \PAR \keyword{spy}\accept x .R
    \osred
      P\subs zy \PAR Q\PAR  R
    }
    \\
        \tag*{\rstcom}
    \frac{
    z\in A 
    }
    {
    \RECEIVET x{y\accept A}P \PAR \SEND  xzQ  \PAR \keyword{spy}\accept x.R
    \osred
      P\subs zy  \PAR Q \PAR R
    }
    \end{gather*}
    \caption{Spied process  semantics}\label{fig:spy}
    \end{figure}

When in $\spyb P$ the spied set $S$ is equal to $\{x\}$, noted
$\keyword{spy}\accept x . P$, this  permits to make explicit which (free) reduction
the spy shall observe. Note that listening on multiple names can be easily programmed by
putting in parallel several spies. 
The spy process $\keyword{spy}\accept\emptyset. P$, noted $\spy P$,  will be used to
detect reductions protected by restriction. 
We let the free and bound names of
the
\emph{spy} be defined as follows:
$\fv(\spyb R)\eqdef S\cup\fv(R)$ and 
$\bv(\spyb R)\eqdef \bv(R)$. 

The semantics of spied processes is
described by adding the communication rules in Figure~\ref{fig:spy} to those in
Figure~\ref{fig:reduction}: 
The rules describe  a form of synchronization among three processes: a sender on
channel~$x$, a receiver on channel~$x$, and a {\it spy} on channel~$x$. More in detail,
rule \rscom depicts a synchronization among an input
of the form $\RECEIVE x{y\forbids B}P$, a sender and a spy, while rule \rstcom describes
a similar three-synchronization but for a trusted input of the form $\RECEIVET x{y\accept 
A}P$. 

The definition of observational equivalence 
for spied processes is obtained by extending  Definition~\ref{def:obs-equivalence}
to the semantics in Figure~\ref{fig:spy}; we indicate the resulting equivalence with
$\scong$. This will permit to study the security of  processes in presence of the
\emph{spy}.

\begin{figure}[!t] 
\begin{gather*}
 \tag*{\slspy,\slspyn}
\frac{  
}
{
\keyword{spy}\accept x.P \lts{ ?x} P 
}\qquad
\frac{  
}
{
\spy P \lts{ ?\nu} P 
}
\\[2mm]
\tag*{\slcom,\slclose}
\frac{
P\lts{ x(y)} P'
\qquad
Q \lts{ \SENDn xy} Q'
}{
P\PAR Q\lts{ !x}  P'\PAR Q' 
}
\qquad 
\frac{
P\lts{ x(y)} P'
\qquad
Q \lts{ (y)\SENDn xy} Q'
\qquad
y\not\in\fv(P)  
}{
P\PAR Q\lts{ !x}  \NR{y}(P'\PAR Q')
} 
\\[2mm]
\tag*{\slspycom}
\frac{
P\lts{!x} P'
\qquad 
Q\lts{?x} Q'
}
{
P\PAR Q\lts \tau P'\PAR Q'
}  
\\[2mm]
\tag*{\slresn,\slhidet}
\frac{ 
    P \lts{\alpha} P'\qquad x\not\in\subject(\alpha) }
{
    \NR x P \lts{ \enc{\alpha}_x} \NR x P'
}\qquad
\frac{
P \lts{ !x} P' \qquad x\not\in\subject(\alpha)\cup\object(\alpha)
}{
\SR x P \lts{\encH{\alpha}_x} \SR x P'
} 
\\[2mm]
\end{gather*}
\caption{Labelled transitions for spied processes}
\label{fig:lts-spy}
\end{figure} 
To make  the picture clear, in Figure~\ref{fig:lts-spy} we introduce labelled
transition semantics for spied processes. 
We consider two new actions $?x$ and $!x$  corresponding
respectively to the presence of a \emph{spy} and  to a signal of communication.
\begin{align*}
\alpha &\grmeq     \cdots \mid\, ?x \mid !x 
\end{align*} 

We assume the existence of variable~$\nu\in\cal N$
that cannot occur in the process syntax,
and we use it to signal restricted communications. 
It is convenient to define the notion of (free) subject and object of an action.
We let $\subject(\alpha)\eqdef \{x\}$ whenever $\alpha=\SENDn xy,(y)\SENDn xy, x(y)$, and
be empty otherwise. We  define $\object(\alpha)\eqdef \{ y\}$ whenever $\alpha=\SENDn
xy,x(y)$, and $\object(\alpha)=\emptyset$ otherwise.

The lts in Figure~\ref{fig:lts-spy} introduces three new rules for the
\emph{spy}, \slspy, \slspyn and \slspycom, and re-defines  the rules for restriction, for
{\sf hide} and for communication of Figure~\ref{fig:lts}. 
In rule \slspy  the process $\keyword{spy}:x. P$ can fire an action $?x$
and progress to~$P$. %
The dual action, $!x$, is fired in rules \slcom and \slclose whenever a communication
occurred on a free channel~$x$.  
Rule \slspycom describes the eaves-dropping of a communication. 
A  process of the form $\keyword{spy}.P$ can only fire an action $?\nu$ through
rule \slspyn.
In rule \lres	 we use a partial  function $\enc{\cdot}_x$ to relabel the action fired
underneath a restriction: we let
$\enc{\alpha}_x\eqdef\alpha$ whenever $x\not\in\fv(\alpha)$, 
$\enc{!x}_x\eqdef !\nu$, $\enc{?x}_x\eqdef ?\nu$.  This will be
used to signal restricted communications, as introduced. 
Differently, in rule \lhide we use a relabeling partial  function $\encH{\cdot}_x$ that
makes  invisible communications that occur under \emph{hide}. We let
$\encH{\alpha}_x\eqdef\alpha$ whenever $x\not\in\fv(\alpha)$, 
$\encH{!x}_x\eqdef \tau$ and $\encH{?x}_x\eqdef \tau$.

\begin{definition}[Bisimilarity]
A symmetric relation $\CR$  over spied processes is a  bisimulation if
whenever $R_1\CR R_2$  and $R_1\lts{\alpha}R'$ then there exists 
a spied process $R''$ such that $R_2 \Lts{\tau}\lts{\hat\alpha}\Lts{\tau}R''$ and 
$R'\CR R''$  where 
$\hat\tau$ is the empty string, and
$\hat\alpha=\alpha$ otherwise. 
Bisimilarity, noted $\sapprox$, is the largest bisimulation.
\end{definition} 

By using the same construction of Section~\ref{sec:bisimulation}, we obtain the
main result of this section: observational equivalence for spied processes and  
bisimilarity  coincide. As a by-product, we can also use bisimulation as a technique to
prove that two processes cannot be distinguished by the {\it spy}. 

\begin{theorem}[Full Abstraction]
\label{theor:spy-fa}
$\scong \, =\,\sapprox$. 
\end{theorem}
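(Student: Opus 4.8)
The plan is to mirror the structure of the proof of Theorem~\ref{theor:character}, i.e.\ to establish soundness ($\sapprox\;\subseteq\;\scong$) and completeness ($\scong\;\subseteq\;\sapprox$) separately, re-using as much of the machinery of Section~\ref{sec:bisimulation} as possible and only adjusting the parts that interact with the new \emph{spy} actions $?x$, $!x$, $?\nu$ and with the relabeling functions $\enc{\cdot}_x$ and $\encH{\cdot}_x$. First I would restate the auxiliary lemma relating the reduction semantics of spied processes (Figure~\ref{fig:spy}) to the labelled transitions of Figure~\ref{fig:lts-spy}: every reduction $P\osred P'$ corresponds to a $\tau$-transition $P\lts{\tau}P'$ (up to $\equiv$) and conversely, where now a $\tau$ can arise either from \slcom/\slclose followed by \slspycom, or from a $!x$/$?x$ made invisible by \slhidet, or from the old rules. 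I would also need the observation that the structural congruence rules newly added in Figure~\ref{fig:spy} are sound for the lts, and that $?\nu$ and $!x$ actions are preserved/reflected by $\equiv$.

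For soundness I would show that $\sapprox$ is barb-preserving, reduction-closed and contextual for the extended calculus, exactly as in Proposition~\ref{prop:soundness}. The only genuinely new points are: (i) closure of $\sapprox$ under the contexts $C[-]\grmeq\cdots\mid\spyb C[-]$ — but since $\spyb{}$ is not among the contexts quantified in Definition~\ref{def:obs-equivalence} (contextuality there is only w.r.t.\ parallel composition, \keyword{new} and \keyword{hide}), this is not actually required; (ii) closure under $\mid$, $\NR x{}$ and $\SR x{}$ must now account for the derived $?x$, $!x$ transitions, so in the bisimulation game one checks that a matching move on one side induces a matching move on the other also when the action is a spy action or a communication signal, using rules \slspycom, \slresn, \slhidet; and (iii) the relabeling functions are partial, so one must verify that whenever $\NR x P\lts{\enc{\alpha}_x}$ or $\SR x P\lts{\encH{\alpha}_x}$ is defined, the corresponding transition on the bisimilar process is matched with the same relabeling applied — this is routine because $\enc{\cdot}_x$ and $\encH{\cdot}_x$ are determined by the action and the bound name $x$, both of which are shared.

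For completeness I would take $\CR\;=\;\scong$ and show it is a spied-bisimulation, following the co-inductive argument of Proposition~\ref{prop:completeness} and building, for each action $\alpha$, a distinguishing context $C^A_\alpha$ that exhibits a fresh barb exactly when the process can perform a weak $\alpha$. The input and free/bound-output cases carry over verbatim from the proof of Proposition~\ref{prop:completeness}. The new cases are $\alpha=!x$ and $\alpha=?x$ (the case $\alpha=?\nu$ only arises under a \keyword{new} and is then relabeled, so it need not be tested directly). For $\alpha=!x$ I would use $C^A_{!x}[-]\eqdef - \mid \keyword{spy}\accept x.\SENDn{\omega}{}$: if $P\Lts{!x}P'$ then $C^A_{!x}[P]$ can reduce (via \slspycom) to a state with the fresh barb $\overline\omega$, while if $P$ cannot, that barb stays blocked; for $\alpha=?x$ one symmetrically puts the missing sender/receiver pair on $x$ in the context so that $P$'s $?x$-capability is the only way to release $\overline\omega$. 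One then argues, as before, that the matching weak reduction of $C^A_\alpha[Q]$ must have been driven by a weak $\alpha$-transition of $Q$, yielding $Q'$ with $C_P\scong C_Q$, from which $P'\scong Q'$ follows by a cancellation lemma ($R_1\mid D\scong R_2\mid D\Rightarrow R_1\scong R_2$ for the fixed finite context $D$ involved).

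The main obstacle I expect is the interplay between the \emph{partiality} of the relabeling functions and the weak bisimulation game: when $P\lts{\alpha}P'$ with $\alpha$ an uncancelled spy signal $!x$, the matching side must reach $P'$ through a weak sequence $\Lts{\tau}\lts{!x}\Lts{\tau}$, and one has to be careful that intervening $\tau$-steps — some of which are themselves $!y$ or $?y$ actions hidden by a \keyword{hide}, or $!\nu$/$?\nu$ actions hidden by a \keyword{new} — do not spuriously consume or generate spy observations. This requires a precise bookkeeping lemma characterizing when a $\tau$ of a spied process decomposes, guaranteeing that the only externally visible spy effect of a $\tau$-sequence is none, so that the weak action $!x$ in the bisimulation clause is well-defined and corresponds, via the reduction/lts correspondence lemma, to a genuine \rscom- or \rstcom-style three-way synchronization observable by the context $C^A_{!x}$. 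Once that lemma is in place, the rest is a faithful adaptation of Theorem~\ref{theor:character}.
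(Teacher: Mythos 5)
Your overall strategy is the paper's own: soundness by showing $\sapprox$ is closed under parallel composition, \keyword{new} and \keyword{hide}, and completeness by the co-inductive argument of Proposition~\ref{prop:completeness} with distinguishing contexts $C^A_\alpha$, where $C^A_{!x}[-]\eqdef -\PAR \keyword{spy}\accept x.\SENDn\omega{}$ and $C^A_{?x}$ supplies the missing sender/receiver pair on $x$. However, there is a genuine gap in your case analysis for completeness: you dismiss $\alpha=?\nu$ as not needing a test, and you never mention $\alpha=!\nu$. Both are first-class visible actions of the labelled transition system of Figure~\ref{fig:lts-spy}: rule \slspyn gives $\spy{P}\lts{?\nu}P$ directly at top level (so $?\nu$ does \emph{not} arise only under a \keyword{new}), and rule \slresn relabels $!x$ to $!\nu$ rather than to $\tau$, so $\NR x P$ emits the observable action $!\nu$ whenever $P$ communicates on the restricted channel $x$. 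Since these actions are distinct from $\tau$, the bisimulation game requires them to be matched, and the completeness direction must therefore exhibit contexts that force a $\scong$-related partner to reveal its weak $!\nu$ and $?\nu$ capabilities; without them the co-induction does not close. This is not a corner case: the very point of the spy is that \keyword{new}-protected communications remain detectable, witnessed by the inequation $\NR x(\SENDn xz\PAR x(y))\not\scong\INACT$ of Section~\ref{sec:examples}, which is separated precisely by a context observing $!\nu$.

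The repair is exactly what the paper does: add $C^A_{!\nu}[-]\eqdef -\PAR\keyword{spy}.\SENDn\omega{}$, where the unannotated spy eavesdrops (via \slspycom) on the $!\nu$ produced by a restricted communication of the tested process, and $C^A_{?\nu}[-]\eqdef -\PAR\NR{x}(x(y).\SENDn\omega{}\PAR\SENDn x{})$, where the context performs a restricted communication whose $!x$ is relabeled to $!\nu$ and which only a process with a $?\nu$ capability can observe, thereby unblocking $\bar\omega$. With these two contexts added, your argument matches the paper's; the remaining issues you raise about the partiality of $\enc{\cdot}_x$ and $\encH{\cdot}_x$ and about intervening $\tau$-steps are legitimate bookkeeping in the reduction/lts correspondence lemma, but they do not require any further new ideas.
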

\begin{proof}[Sketch of the proof]
To see that behavioural equivalence is included in bisimilarity, we proceed by
co-induction as in the  proof of Proposition~\ref{prop:completeness} by 
relying on contexts $C^A_\alpha$ that detect whenever a process does emit a weak
action $\alpha$.  Given a set of names $A$ such that $\fv(\alpha)\subseteq A$
and $\omega\not\in A$ we define the following contexts to account for the new actions
$!x$ and $?x$.
\begin{align*}
C^A_{!x}[-]&\eqdef \keyword{spy}\accept x. {\SENDn\omega{}} &&x\ne \nu
\\
C^A_{?x}[-]&\eqdef x(y).\SENDn\omega{} \PAR \SENDn x{} &&x\ne \nu
\\
C^A_{!\nu}[-]&\eqdef \keyword{spy} . {\SENDn\omega{}}  
\\
C^A_{?\nu}[-]&\eqdef \NR{x}(x(y).\SENDn\omega{} \PAR \SENDn x{})  
\end{align*}
The proof then proceeds routinely by following a schema similar to the one of
Proposition~\ref{prop:completeness}.
The reverse direction, namely that bisimilarity is contained in behavioural equivalence,
is shown by proving that $\sapprox$ is closed under the {\sf new}, {\sf hide}, and
parallel composition operators. See~\cite{tech-report} for all the details.
\end{proof}

\section{Properties of the secret $\pi$-calculus}
\label{sec:examples} 
In this section we discuss some algebraic 
properties of the secret $\pi$-calculus, and we  
 show how we can implement the name matching operator. Lastly we provide an example
of deployment of a mandatory access control policy that is inspired by the D-Bus
technology~\cite{dbus}.
In the following, we write $P\not\scong Q$ to indicate that $(P,Q)\not\in\;\scong$.
We also write $\SENDn{x}{}$ and omit to indicate the message in output whenever
this is irrelevant, and use the notation $\SR B P$ to indicate the process
$\SR {b_1} \cdots\SR {b_n} P$ whenever $B=\{b_1,\dots,b_n\}$.

\paragraph{\bf Algebraic equalities and inequalities} 
The first inequality illustrates the mechanism of blocked names.
\begin{align}
x(y\forbids B).P \not\scong x(y\forbids B').P  &&
B\ne B'\label{eq:forbids}
\end{align}
To prove (\ref{eq:forbids}) let $z\in B'$, $z\not\in B$ and consider the context 
  $C[-]\eqdef  \SR {B,B'}[\SEND  xz\SENDn\omega{ }\PAR - ]$  with
$\omega$ free, $\omega\not\in\fv(P)$. 
By applying \rcom followed by applications of \rhide we 
have that $C[x(y\forbids B).P ]\osred \SR {B,B'}[\SENDn\omega{ }\PAR P\subs zy] $, that
is 
$C[x(y  ).P]\Barb_{\bar\omega}$. In contrast, we have that  
\mbox{$C[x(y\forbids B').P]\nBarb_{\bar\omega}$}, because of $z\in B'$. 
The case  $B'\subseteq B$ is analogous.

We have a similar result for accepted names.
\begin{align}
x[y\accept A].P \not\scong x[y\accept A'].P  &&
A\ne A'\label{eq:forbids}
\end{align}
A distinguishing context is $C[-]\eqdef   \SEND  xa\SENDn\omega{ }\PAR -  $ 
where $\omega$ is fresh and $a\in A, a\not\in A'$ if $A\not\subset  A'$, and  
$a\in A', a\not\in A$ otherwise.

The next  inequality  illustrates the discriminating power of the
{\it spy}. 
\begin{align} 
\NR x (\SENDn xz \PAR x(y) )&\not\scong \INACT
\label{eq:weak}
\end{align}
To prove (\ref{eq:weak}), consider the context $C[-]=\keyword{spy} .
{\SENDn \omega{}}\PAR -$.
By applying \rscom and \rres followed by \rstruct we infer 
$C[\NR x (\SENDn xy \PAR x(y))]\osred  \SENDn\omega{}$: that is,
$C[\NR x (\SENDn xy \PAR x(y))]\Barb_{\bar{\omega}}$
while $C[\INACT]\nBarb_{\bar \omega}$. 

The invisibility of communications protected 
by using the \emph{hide} operator is established by means of the equation below, which is 
proved
by co-induction.
\begin{align} 
\SR x [\SENDn xz \PAR x(y).Q] &\scong\SR x[Q\subs zy] \label{eq:hide-weak}
\end{align}

The last equation states the impossibility of extrusion of hidden channels. 
\begin{align}
\SR x [ \SENDn zx ]\scong \INACT \label{eq:strong-fs}
\end{align}

\paragraph{\bf Implementing name matching}  
Name matching is not needed as an operator in our calculus
(cf.~\cite{CarboneM03}). We show this  
by  providing a semantics-preserving translation of the if-then-else
construct~\cite{Hen07}. 
Consider the process
$\IF {x=y}PQ$ which
reduces to $P$ whenever $x=y$, and reduces to $Q$ otherwise.
Let $Z\eqdef\fv(\IF
{x=y}PQ)$; therefore there are names $z_1,\dots, z_n$, $n\geq 0$, s.t.   
$Z=\{x,z_1,\dots,z_n\}$. Let $I=\{1,\dots,n\}$ and assume~$k$ fresh. We define:
\begin{align*}
\encSQA{\IF {x=y}PQ} &\eqdef
 \SR k[y[w\accept k] \PAR \SEND xk(P\uplus k)\PAR_{I}\SEND {z_i}k(Q\uplus
k) ]
\end{align*}  
Whenever $x=y$, we have that the only possible reduction arises among
the trusted input $y[w\accept k]$ and $\SEND xk(P\uplus k)$, leading to
$P'\eqdef \SR k[P\uplus k \PAR_I \SEND
{z_i}k(Q\uplus k)]$. Note that  $P$ and $P'$ have the same interactions with
the context, because $k$ is blocked in all threads of $P'$: therefore $Q$ cannot be
unblocked. 
This result can be formalized by relying on the behavioural theory~\footnote{Note that
observational equivalence is not
preserved by input-prefixing; the outlined translation could be indeed sensitive to
name aliasing.} of the secret
$\pi$-calculus.

\noindent
We infer the following equation:
\begin{equation}
\encSQA{\IF {x=x}PQ}\cong P \label{eq:match}
\end{equation}
  
\medskip\noindent
Consider now the case $x\ne y$ and let $y=z_1$. 
The matching process reduces to the rearranged process 
$\SR k[\SEND xk(P\uplus k) \PAR Q\uplus k \PAR_{ 
\{2,\dots,n\}}\SENDn{z_i}k(Q\uplus k)]$, which has the same behaviour
of~$Q$:
%
\begin{equation}
\encSQA{\IF {x=y}PQ}\cong Q \qquad x\ne y \label{eq:mismatch}
\end{equation} 

\paragraph{\bf Modeling dedicated channels}
Security mechanisms based on dedicated
channels can be naturally modeled in the secret $\pi$-calculus. 
D-Bus~\cite{dbus} is an IPC system for software applications 
that is used in many desktop environments. Applications of each user
share a private bus for asynchronous message-passing
communication; 
a system bus
permits to broadcast messages among applications of different users. 
Versions smaller than $0.36$ contain  an erroneous access policy for
channels which allows users to send and listen to messages on another user's channel
if the address of the socket is known. We model this vulnerability  
by means of an {\it internal } attacker that leaks the user's
channel. In the specification below, two applications of an user~$U_1$ utilize
a private bus to exchange a password; in fact, the password  can be intercepted by the
user~$U_2$ through  the malicious
code~$!\SENDn {\sys}{c}$ of $U_1$, which publishes~$c$ on the system  bus. 
    \begin{align}
     U_1&\eqdef \NR{c}(! \SENDn {\sys}{c}\PAR 
\NR{\pwd}\SENDn {c}{\pwd}  \PAR \RECEIVE {c}x{P}  )  
&  U_2&\eqdef \sys(x).x(y_{pwd}).Q  
\end{align} 

The patch released by  Fedora  restricts the access to the
user's bus: only applications with the same user-id  can
have access. We stress that this policy is mandatory: that is, the user cannot
change it. By using the secret \mbox{$\pi$-calculus} we can easily patch $U_1$ by hiding
the
bus: 
$U' \eqdef \SR{c}[! \SENDn {\sys}{c}\PAR 
\NR{\pwd}(\SENDn {c}{\pwd} ) \PAR \RECEIVE {c}x{P}]$. 
The following equation, which can be proved co-inductively,  states that the
policy is fulfilled even in presence
of internal attacks: 
\begin{align} 
U' &\scong \SR{c}[\NR{\pwd}({P}\subs{\pwd} x)] \label{eq:dbus}
\end{align}

\shrink{
While we do not have a formal separation result, we believe that enforcing such
mandatory policies in the (untyped) $\pi$-calculus is difficult. On contrast, our
calculus permits to naturally model security mechanisms based on dedicated channels
that cannot be disclosed.  
}

\section{Related work}
\label{sec:discussion}

\ignore{ 
The idea that the restriction operator of the $\pi$-calculus ({\sf new}) and its scope extrusion mechanism
can model the possession and communication of a secret goes back to the spi
calculus~\cite{AbadiG99}. In this approach, one can devise specifications of protocols
that rely on  channel-based communication protected by {\sf new}, and establish the
desired security properties by using  equivalences representing  indistinguishablity even in the presence of a  Dolev-Yao attacker.
However, while the secrecy properties of some  protocols can be enforced  by relying on
dedicated channels, this is not  obvious when protocols describe
inter-site communication over the Transport/Internet layer. 
It is also not obvious  to implement scope extrusion over  dedicated channels. 
In the presence of open, untrusted networks the  secrecy of the protocol needs to be
recovered by means of other mechanisms, for instance by using  asymmetric
cryptography. However, preserving secrecy in the presence of  scope
extrusion is problematic~\cite{abadi98}, and eventually leads to complex cryptographic
protocols which rely on a set of certified authorities~\cite{popl07}. 
Moreover, the attacker can detect a communication, even if she cannot retrieve the content
of the message. 

Based on  these considerations, in this paper we argue that the
{\sf new} operator of the $\pi$-calculus does not adequately represent confidentiality.
We enrich the $\pi$-calculus with another operator for security, ({\em hide}), which
differs from {\sf new} in that it forbids the extrusion of a name and hence has a static
scope. To emphasize the
difference, we introduce a  spy process that represents a side-channel attack and
breaks some of the standard security equations for {\sf new}.
}

Many analysis and programming techniques for security have been developed for process
calculi. Among these, 
we would mention the security analysis enforced by means of static and dynamic
type-checking  (e.g.~\cite{CardelliGG05,Hennessy05,tgc05}), 
the verification of secure implementations and protocols that are 
protected by cryptographic encryption
(e.g.~\cite{BorealeNP01,AbadiFG02,AbadiBF07,popl07}),
and  programming models that consider a notion of location   
(e.g.~\cite{Hen07,SewellV03,CastagnaVN05}).

The paper~\cite{CardelliGG05} introduces a type system for a $\pi$-calculus with groups
that permits to control the distribution of resources: names can be received only by
processes in the scope of the group. The intent is, as in our paper, to preserve the
accidental or malicious leakage of secrets, even in the presence of untyped opponents. 
A limitation of~\cite{CardelliGG05} is that processes
that are not statically type-checked are interpreted as opponents trying to leak secrets. 
On contrast, our aim is to consider systems where processes  could dynamically join
the system at run-time; this permits us to analyze the secrecy of protocols composed by
trusted sub-systems that can grow in size of the number of the participants.
While devising an algorithm for type checking groups can be non-trivial 
(cf.~\cite{VasconcelosH93}), 
we note that actual systems do not often rely on types, even for local
communications. For instance D-Bus (cf. Section~\ref{sec:examples}) relies on a mandatory
access control policy  enforced at the kernel level through process IDs. Our
semantics-based approach appears as adequate to describe such low-level mechanisms.  

As discussed in the introduction, concrete implementations of $\pi$-calculi models
do protect communications by means of cryptography. The problem of devising a secure,
fully abstract  implementation has been first introduced
in~\cite{abadi98} and subsequently tackled for the join calculus in~\cite{AbadiFG02}. 
The paper~\cite{BorealeNP01} introduces a bisimulation-based technique to prove
equivalences of processes using cryptographic primitives; this can be used to show that
a  protocol does preserve secrecy. We follow a similar
approach and devise bisimulation semantics for establishing the secrecy of processes
running in an environment where the distribution of channels is controlled.  
The presence of a spy in our model is reminiscent of the  network abstraction
of~\cite{mik-mscs10}. In that paper, the network provides the low-level counter part of
the model where attacks based on bit-string representations, interception,  and
forward/reply
can be formalized. 

From the language design point of view, we share some similarity with the ideas
behind the boxed $\pi$-calculus~\cite{SewellV03}. 
A box in~\cite{SewellV03} acts as wrapper where we can confine untrusted process;
communication among the box and the context is subject to a fine-grained control  that
prevents the untrusted process to harm the protocol.
Our {\sf hide} operator is based on the symmetric principle:  processes
within the scope of an  {\sf hide} can run their protocol without be disturbed by the
context outside it.  
 

An interesting approach related to ours in spirit -- but not in conception or details --
is D-fusion~\cite{BorealeBM04}. 
The calculus has two forms of restriction:  A
"$\nu$" operator for name generation, and a "$\lambda$" operator that  behaves like an
existential quantifier and it can be seen as a generalization of an input binder.  Both
operators allow extrusion of the entities they declare but only the former guarantees
uniqueness.  In contrast our {\sf hide} operator is not meant as an existential nor as an
input-binder and  it prevents the extrusion of the name it declares.

\medskip\noindent{\bf Acknowledgements }
 We wholeheartedly thank the extremely competent, anonymous reviewers of EXPRESS 2012. 
 They went beyond the  call of duty in providing  excellent reports which have been very helpful to improve our paper. 

\bibliography{sessions}
\bibliographystyle{eptcs}

 
\end{document}